\documentclass[11pt,onecolumn]{article}
\usepackage{fullpage}
\usepackage{times}
\usepackage{titlesec}
\usepackage{appendix}
\usepackage{cases}
\usepackage{amsthm}
\usepackage{epstopdf}
\usepackage{mathrsfs}
\usepackage{colortbl}
\usepackage{url}
\usepackage{threeparttable}
\usepackage{multirow}
\usepackage{subfigure}
\usepackage{epsfig}
\usepackage{algorithm,algorithmic}
\usepackage{caption2}
\newtheorem{theorem}{Theorem}

\newtheorem{corollary}{Corollary}
\newcommand{\para}[1]{\vspace{0.01in}\noindent\textbf{#1 }}
\newcommand{\sjqxzlhy}[1]{\textcolor{black}{#1}}

\newcommand{\hstanq}[1]{\textcolor{black}{#1}}
\newcommand{\hstan}[1]{\textcolor{black}{#1}}

\begin{document}
\title{\bf Optimal Rendezvous Strategies for Different Environments in Cognitive Radio Networks}
\author{Haisheng Tan$^1$\thanks{Corresponding Author. Email: thstan@jnu.edu.cn.}  \quad Jiajun Yu$^2$\quad Hongyu Liang$^2$ \quad Tiancheng Lou$^3$\quad Francis C.M. Lau $^4$ \\\\
$^1$ Department of Computer Science, Jinan University, Guangzhou, China\\
$^2$ Institute for
Theoretical Computer Science, IIIS, Tsinghua University, Beijing,
China\\
$^3$ Google Inc., California, USA\\
$^4$ Department of Computer Science, The
University of Hong Kong, Pokfulam, Hong Kong\\}

\date{}
\maketitle

\begin{abstract}

In Cognitive Radio Networks (CRNs), the secondary users (SUs) are allowed
to access the licensed channels opportunistically. A fundamental and
essential operation for SUs is to establish communication through
choosing a common channel at the same time slot, which is referred to as
rendezvous problem. In this paper, we study strategies to achieve
fast rendezvous for two secondary users.

The channel availability for secondary nodes is subject to temporal
and spatial variation. Moreover, in a distributed system, one user is
oblivious of the other user's channel status. Therefore, a fast rendezvous
is not trivial. Recently, a number of rendezvous strategies have been
proposed for different system settings, but rarely have they taken
the temporal variation of the channels into account.
In this work, we first derive a time-adaptive
strategy with optimal expected time-to-rendezvous (TTR) for
synchronous systems in stable environments, where channel
availability is assumed to be static over time. 
Next, in dynamic environments, which better represent 
temporally dynamic channel availability in CRNs, we first derive optimal
strategies for two special cases, and then prove that our strategy is
still asymptotically optimal in general dynamic cases.

Numerous simulations are conducted to demonstrate the performance of our
strategies, and validate the theoretical analysis. The impacts of
different parameters on the TTR are also investigated,
such as the number of channels, the channel open possibilities, the
extent of the environment being dynamic, and the existence of an intruder.
\end{abstract}
\newpage
\section{Introduction}\label{sec:intro}
\subsection{Background}
The number of wireless devices has skyrocketed over the last decade,
which exasperates the scarcity of spectral resources. The reality is that
most wireless spectrum bands have been allocated by regulatory agencies
to licensed users, which however are severely under-utilized. It is
reported that the utilization of licensed spectrum varies from
$15\%$ to $85\%$~\cite{cnAkyildiz}, and only $5\%$ of the spectrum
from $30$ Mhz to $30$ GHz is used in the US~\cite{sigcomm09Bahl}.
Recently some regulators have issued permissions for unlicensed
devices to use parts of the licensed spectrum under restrictions. %
Cognitive Radio (CR) realizes the unlicensed
devices (called \emph{secondary users}) to utilize the temporarily
unused licensed spectrums without interfering with the licensed
devices (called \emph{primary users}). Therefore, CR is a promising technology to alleviate the spectrum
shortage problem in wireless communication, and Cognitive Radio Network
(CRN) is considered the next generation of communication networks.

In a CRN, through spectrum sensing~\cite{ss11Taricco}, each secondary
user (SU) has the ability to detect current open (or available)
channels at its site, i.e., channels that are not occupied by the
primary users (PUs). Due to dynamic come and go of PUs, the
available channels to SUs have the following important
characteristics: 1) \emph{Spatial
Variation}: SUs at different locations may have
different available channels; and 2)\emph{Temporal Variation}: the
available channels of a SU may change over time.

Spectrum assignment is to allocate available channels to SUs to
improve network performance, such as connectivity, spectrum
utilization, network
throughput and fairness~\cite{Infocom12Li,Infocom12Lu,JsacRen,mobihoc12Huang,mobihoc07Yuan}.
Spectrum assignment is one of the most challenging problems in CRNs.
One fundamental action in spectrum assignment is referred to as
the \emph{rendezvous} problem of which the simplest form can be stated as:

 \emph{For two secondary users, Alice and Bob, how do
they establish a connection through choosing a common channel at the same
time slot?}

We call the approach they adopt for choosing a
channel at each round as \emph{a rendezvous strategy}. A strategy attaining
a fast time-to-rendezvous (TTR) is not trivial due to the following facts: 1) the local available channels for Alice and Bob are asymmetric; 2) in a distributed system, one SU only knows his/her local channel status but not the
other's; 3) the local channel availability may change during the rendezvous; and 4) in asynchronous systems, rendezvous is even more
complicated as one does not know the time of the other SU or how many
tries the other SU has made.

\subsection{Related Work}
\subsubsection{Basic Rendezvous Problem and its Variants}
In the basic form of the Rendezvous problem (also called the
telephone coordination game \cite{TelCoGame}), two players are
placed in two separated rooms each with $n$ telephone lines. There
is a one-to-one matching between the lines in two rooms,
and players can only use matched lines to contact each other;
the matching is not known to either of the players. In every
round each player can select a line in his room and see whether it
is connected with the line chosen by another player. The goal is to
design a strategy for the players to get connected while minimizing
the number of trial rounds.

It is not hard to see that the optimal strategy takes $n/2$ rounds
in expectation, which is achievable by letting the first player
choose a random line and keep using it, and letting the second
player try the lines one by one in a random order. However, this
strategy only works when the players can correctly determine who is
the first player, which may not be possible in a given application.
Thus, a player-independent strategy is desired. Anderson and Weber \cite{jap90} presented such
a strategy using at most $0.829n$ rounds in expectation, where each player
will repeat either to keep on choosing one line or randomly choosing a line
different from his last choice with some
probabilities. There are many other well-studied
variants of the Rendezvous problem; see, e.g.,
\cite{jap99,AlpernGal03} and the references therein. Also, some researcher considered jammers in rendezvous, such as the node discovery in (conventional) multiple-channel wireless communications~\cite{DCOSS09Roger}.

\vspace{-2mm}\subsubsection{Rendezvous in CRNs}
In contrast to the basic version, rendezvous in CRNs happens in an asymmetric
dynamic scenario, where each user may have different and time-variant sets
of available channels. There have been a number of rendezvous strategies
proposed for CRNs. One group of these strategies makes use of a dedicated
channel, called a Common Control Channel (CCC), to exchange information
between users for rendezvous (e.g.,
~\cite{JcommC,JsacJia,dyspan07PR,dyspan05Zhao}). However, assuming a
CCC might not be practical as the channel could be occupied by some PUs,
and it also introduces an easy
attack point. \emph{Blind rendezvous} without
any centralized controller or a dedicated CCC is therefore
preferred (in this paper, unless
specified otherwise, rendezvous stands for this blind case). A main group of
these rendezvous strategies adopts the channel-hopping (CH) technique, where
SUs hop among their available channels based on a hopping sequence to
achieve rendezvous (e.g.,~\cite{SECON12Gandhi,Infocom11Lin,TMCTheis}).
Quorum Systems are also frequently adopted for rendezvous in CRNs(e.g.,~\cite{ADHOC13}). Besides, \hstan{in paper~~\cite{esaAzar}, Azar et. al. proposed an
 strategy using geometric distribution for asynchronous systems, and proved it to be optimal when there are a large number of channels.} Essential details of their strategy
will be covered in
Section~\ref{sec:stable:azar}. Although most of the above work have
considered the spatial variation of the channels, i.e., users in different
locations may have different sets of available channels, they are limited
to stable (channel) environments where the channel availabilities are static
over time. Not many of them take the temporal variation of channels into
account. In this work, we will study both the stable and the dynamic
environment in asynchronous and synchronous systems. We hope this work may
inspire further research on the dynamic version of the rendezvous problem.

\vspace{-3mm}\subsection{Our Results}

In this paper, we investigate the rendezvous problem for two SUs and derive
optimal strategies for different system settings. For the sake of
theoretical analysis, 
we assume each channel of Alice and Bob has an available probability of $p_a$
and $p_b$ respectively, and denote the total number of channels as $n$. Our results can be summarized as follows:
\begin{itemize}
\vspace{-2mm}\item For a synchronous stable environment, we derive a 
time-adaptive strategy that guarantees successful rendezvous at the first
common channel, say channel $m$, within at most $m$ rounds. The expected
TTR of our strategy is $\frac{2-\max(p_a,p_b)}{\min(p_a,p_b)}$ when
$n\rightarrow +\infty$, which is a
2-approximation to the optimal.
\vspace{-2mm}\item Our main effort is devoted to the dynamic environments which better
reflect the nature of
temporal variation of channel availability in CRNs. 
We first define two special cases, the semi-stable and the independent
dynamic cases (refer to Section~\ref{sec:dynamic}). For the
synchronous semi-stable case, we derive an optimal strategy based on the
one used in the synchronous stable environment. For the independent dynamic
environment, we derive a simple stationary strategy and prove that it is
exactly optimal no matter whether there is a common clock or not.  When
$n\rightarrow +\infty$, its expected TTR is
$\frac{1}{p_a}+\frac{1}{p_b}-1$. Then, we model the channel availability in
the general dynamic environment as a Markov process. When neither of the
environments of two SUs is stable nor semi-stable, we prove the
expected TTR of our strategy is
$O\left(\ln\left(\frac{1}{\min(p_a,p_b)}\right)\cdot
\left(\frac{1}{p_a}+\frac{1}{p_b}-1\right)\right)$, which is asymptotically optimal.
\vspace{-2mm}\item Based on 
simulation, we validate the above theoretical
analysis and demonstrate the efficiency of our strategies when there are a
small number of channels. Besides, we reveal the impacts of different
parameters on the TTR, such as the number of channels, the channel open
possibilities, the extent to which the environment being dynamic,
and the existence of an intruder.
\end{itemize}
\para{Paper Organization:}
In Section~\ref{sec:def}, we formally define our model and problems
studied in this paper. We investigate the rendezvous problem in
stable environments in Section~\ref{sec:stable}. The different cases
in dynamic environments are studied in Sections~\ref{sec:dynamic}.
Section~\ref{sec:sim} gives the simulation results and discussions.
The whole paper is concluded in Section~\ref{sec:con} with possible
future works.

\vspace{-3mm}\section{Problem Definition}\label{sec:def}

We consider a pair of secondary nodes, called Alice and Bob, which
need to establish a connection. There are totally $n$ channels with
ID's from $1$ to $n$. A binary vector $A^t=\{A_1^t, A_2^t,\ldots,
A_n^t\}$ indicates whether a channel is open for Alice at time $t$, e.g., if channel $i$ is open
\sjqxzlhy{at time $t$}, $A_i^t=1$; otherwise, $A_i^t=0$. Vector $B^t$ is similarly defined for Bob.
Through spectrum sensing, both Alice and Bob can know their local
available channels. However, as in a distributed system, Alice and Bob
are oblivious of $B^t$ and $A^t$, respectively. In addition, there is no information (i.e., the node IDs) to break the symmetry of the two nodes.

We investigate both \emph{synchronous} and \emph{asynchronous}
distributed systems in \emph{stable} and \emph{dynamic}
environments. In a synchronous system, Alice and Bob will have a
common clock, \sjqxzlhy{whereas in an asynchronous system, they do not
know each other's clock.} As mentioned before, in a stable
environment, the channel availability will be static over time,
while in a dynamic environment, the channel status may change over
time. We assume at least one common channel exists in the channel
environment, or otherwise the rendezvous can never happen\footnote{In dynamic
environments, there could be no common channel at some time slots.}. Within
a round (time slot), each node will try to achieve rendezvous once using
a strategy which is stationary or adaptive over the rounds.
\begin{table}
\centering
\caption{Definitions of the symbols}\label{table:symbol}
\begin{tabular}{p{0.16\textwidth}p{0.72\textwidth}}
\hline
 \multicolumn{1}{p{0.16\textwidth}}{Symbol} &  \multicolumn{1}{p{0.68\textwidth}}{Definition} \\
 \hline
$n$ & the total number of channels\\
 $A_i^t$ or $B_i^t$   &  availability of channel $i$ for Alice or Bob at time $t$\\
  $\mathcal{S}_a$ or $\mathcal{S}_b$ & the strategy of Alice or Bob \\
 $\mathcal{S}_a^{t}(i)$  or $\mathcal{S}_b^{t}(i)$ &  \sjqxzlhy{the probability} that Alice or Bob chooses channel
 $i$ at time slot $t$\\
 $\phi_t$ & the flag \sjqxzlhy{indicating} whether a rendezvous is achieved successfully
 at time $t$\\
 $TTR$ & the time-to-rendezvous\\
 \hline
\end{tabular}
\end{table}

In CRNs, one node might wake up and start trying rendezvous
earlier than the other. These
tries will definitely fail. Thus, we count the time-to-rendezvous
(TTR) from the starting point when both Alice and Bob have
waken up. Based on the symbols in Table~\ref{table:symbol}, the
possibility for a successful rendezvous at $t$ is $Pr[\phi_t=1 | \{A_i^t\},\{B_i^t\}]=\sum^{n}_{i=1} A_i^tB_i^t\mathcal
{S}_a^t(i)\mathcal {S}_b^t(i).$
 The TTR is the first instant
when Alice and Bob choose a common channel simultaneously: $TTR =\min\{t: \phi_t=1\}$.
\sjqxzlhy{Our} goal is to derive 
strategies for Alice and Bob that minimize $E[TTR]$, the
expectation of TTR.

For convenience of analysis, at any round we assume each channel
$i$ of Alice has a probability of $p_a $ to be open
($0<p_a\leq 1$) independently. That is to say, for Alice, in the dynamic environment, without the knowledge of the channel status in previous slots, $Pr[A^t_i=1]=p_a$, $ \forall t\geq 1$; in the stable environment, each channel will open with probability of $p_a$ at the first time slot and never change status subsequently. $p_b$ is similarly defined for Bob.

\vspace{-2mm}\section{Stable Environment}\label{sec:stable}
\subsection{Strategy for Asynchronous Systems}\label{sec:stable:azar}
In an asynchronous system, as there is no common clock, neither of
Alice and Bob know the time of the other player \sjqxzlhy{nor} how many
time slots the other player has tried for rendezvous. Therefore, a
strategy that is adaptive to time is meaningless, and stationary
strategies are \sjqxzlhy{required}. In \cite{esaAzar}, the authors
proposed a stationary strategy based on
geometric distributions shown as Strategy A in Appendix~\ref{app:azar}.
Its expected TTR is proved to be $O(\frac{1}{p_ap_b})$ when
$n\rightarrow +\infty$. In addition, it is proved that the strategy
is essentially optimal as the TTR of any stationary strategy is
$\Omega(\frac{1}{p_ap_b})$. However, a main weakness of Strategy A is that
Alice (Bob) has to have the knowledge of $p_b$ ($p_a$)\footnote{The authors
also considered a third
party, called Eve, which is treated as an intruder in
Section~\ref{sec:sim:impact:q}.}, which might be not feasible in practice. 

Next, we will extend their work and present our simple optimal strategy in
synchronous stable environments which guarantees a fast rendezvous and is
applicable for finite channels.

\vspace{-2mm}\subsection{Strategy for Synchronous Systems} \label{sec:stable:syn}
In a synchronous system, although one user is oblivious of the other user's
channel status, he/she is aware of the time of the other user. Therefore, we
can derive non-stationary strategies
that \sjqxzlhy{are} adaptive to time:
\begin{table}[htpb]\centering
\begin{tabular}{p{0.95\textwidth}}
 \hline
Strategy B:\\
\hline \sjqxzlhy{In} the $i$-th round ($i\geq 1$),\\
 \quad $\mathcal {S}_a$:  \sjqxzlhy{Alice chooses her first local open channel from channel $i$ to channel $n$; that is, she chooses channel $a^*=\min\{a~|~A_a^{i}=1;i\leq a\leq n\}$.}\\
\quad $\mathcal {S}_b$:  \sjqxzlhy{Bob chooses his first local open channel from channel $i$ to channel $n$; that is, he chooses channel $b^*=\min\{b~|~B_b^{i}=1; i\leq b\leq n\}$.}\\
\hline
\end{tabular}
\end{table}
Strategy B is actually \sjqxzlhy{a} novel \emph{waiting-to-meeting} scheme: the one who reaches the first
common open channel will stick to it until the other reaches that same
channel when they achieve rendezvous. Therefore, we have this theorem:
\begin{theorem}
Suppose the first common open channel of Alice and Bob is channel
$m$. With Strategy B, they will definitely achieve rendezvous on
$m$ with no more than $m$ rounds in synchronous stable environments.
\end{theorem}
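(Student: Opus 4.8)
The plan is to exploit the one feature that distinguishes the stable environment: each player's set of open channels is fixed over time, so that $A_a^i$ and $B_b^i$ do not depend on the round $i$. Write $A_a:=A_a^1$ and $B_b:=B_b^1$ for these time-independent indicators, and let $a^*(i)=\min\{a\mid A_a=1,\ i\le a\le n\}$ and $b^*(i)=\min\{b\mid B_b=1,\ i\le b\le n\}$ denote the channels Alice and Bob select in round $i$ under Strategy B. Under stability the strategy thus collapses to a deterministic left-to-right scan. A useful first observation is that $a^*(i)$ and $b^*(i)$ are both non-decreasing in $i$, since raising the lower bound $i$ can only push the minimum upward or leave it unchanged; this monotonicity is exactly what makes the informal \emph{waiting-to-meeting} picture precise, as the player who first lands on channel $m$ cannot move off it while $i\le m$.

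Next I would pin down the behaviour at round $m$. Because channel $m$ is common we have $A_m=B_m=1$, so at round $i=m$ the lower bound equals $m$ and channel $m$ is open for each player; minimality then forces $a^*(m)=b^*(m)=m$. Hence, if no rendezvous has occurred earlier, one occurs at round $m$ on channel $m$, which already establishes the bound of at most $m$ rounds.

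It then remains to argue that the first rendezvous is on channel $m$ rather than on some other channel. Suppose the first rendezvous happens at round $i$; by the previous step $i\le m$. At that round both players pick the same channel $c=a^*(i)=b^*(i)$, and since $c$ is selected from each player's own open channels it is open for both, i.e.\ a common channel, so $c\ge m$ by the definition of $m$ as the \emph{first} common channel. On the other hand, channel $m$ satisfies $A_m=1$ and $m\ge i$, so $m$ lies in Alice's candidate set $\{a\mid A_a=1,\ a\ge i\}$ for round $i$; minimality gives $a^*(i)\le m$, that is $c\le m$. Combining the two inequalities yields $c=m$, so the first (and only possible) rendezvous lands on channel $m$ within $m$ rounds.

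The one place that needs care, and the main obstacle, is this last uniqueness step: one must rule out an earlier ``accidental'' rendezvous on a higher-indexed common channel. The resolution is purely the interaction between the minimality built into Strategy B and the minimality in the definition of $m$: because $m$ is the smallest commonly open channel and always sits at or above the current lower bound $i$ for every $i\le m$, neither player can skip past it to a larger common channel. I do not expect any nontrivial computation; once stability reduces the strategy to a deterministic scan, the whole argument is combinatorial and follows from these two minimality facts.
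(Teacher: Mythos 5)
Your proof is correct and follows the same idea as the paper, which offers only the one-line informal ``waiting-to-meeting'' justification (the first player to reach the first common open channel sticks to it until the other arrives); your writeup is a faithful formalization of exactly that intuition. The two minimality observations you isolate --- that $a^*(i),b^*(i)\le m$ for all $i\le m$ because $m$ stays in each candidate set, and that any channel where they meet must be a common channel and hence $\ge m$ --- correctly close the gap the paper leaves implicit, namely ruling out an earlier rendezvous on a higher-indexed common channel.
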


Next, we prove the optimality of our strategy.
\begin{theorem}\label{thm:perfB}
\sjqxzlhy{In the synchronous stable environment, when
$n\rightarrow +\infty$, the expected TTR of Strategy B
satisfies $E[TTR]\leq \frac{2-\max(p_a,p_b)}{\min(p_a,p_b)},$
which is a 2-approximation to the optimal.}
\end{theorem}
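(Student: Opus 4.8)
The plan is to exploit the waiting-to-meeting structure from the preceding theorem: since rendezvous always occurs on the first common channel $m$, the time-to-rendezvous equals the round at which the \emph{later} of the two players' search pointers first lands on $m$. In round $i$ each player picks the smallest locally open channel with index $\ge i$, so a player's pointer reaches $m$ in round $r=a'+1$, where $a'$ is the index of that player's last open channel strictly below $m$ (with $a'=0$ if none). Writing $\mathrm{reach}_a$ and $\mathrm{reach}_b$ for these two rounds, $TTR=\max(\mathrm{reach}_a,\mathrm{reach}_b)$. First I would record this identity and reduce the problem to computing the expectation of this maximum under the i.i.d.\ channel model, where each channel is, independently, \emph{common} (open for both) with probability $p_ap_b$, open for exactly one player, or closed for both with probability $(1-p_a)(1-p_b)$.

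The key simplification is to rewrite the maximum in terms of $m$ and the run of mutually-closed channels immediately preceding it. Let $Z$ be the length of the maximal block of channels just below $m$ that are closed for both players. Because every channel with index $<m$ is non-common, the last channel below $m$ that is open for \emph{some} player sits at position $m-1-Z$, and this position determines the meeting round of both pointers; hence $TTR=m-Z$. This is the crux, since it decouples the awkward $\max$ of two dependent reach times into two independently computable quantities. Here $m$ is geometric with success probability $p_ap_b$, giving $E[m]=1/(p_ap_b)$; and scanning the non-common prefix (whose entries are i.i.d.) gives the clean tail $\Pr[Z\ge j]=\big((1-p_a)(1-p_b)\big)^{j}$, so $E[Z]=\frac{(1-p_a)(1-p_b)}{1-(1-p_a)(1-p_b)}$. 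Subtracting yields a closed form for $E[TTR]$, which I would then bound above by the claimed expression $\frac{2-\max(p_a,p_b)}{\min(p_a,p_b)}$.

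The last component is the optimality claim. Here I would pair the upper bound with a lower bound on $E[TTR]$ for \emph{any} synchronous strategy: intuitively, because neither player can identify $m$ from local information and there is no symmetry-breaking, agreeing on a common channel costs on the order of $1/(p_ap_b)$ expected rounds, which is within a constant factor of the bound above. I expect the main obstacle to lie precisely here — not in the upper-bound computation, which is routine once the identity $TTR=m-Z$ is in hand, but in making the lower bound rigorous, since a synchronous player has strictly more information (the shared clock) than in the asynchronous setting, so the $\Omega(1/(p_ap_b))$ bound for stationary strategies does not transfer directly. I would attack this with a direct adversarial argument that fixes the realized channel pattern and bounds the per-round rendezvous probability. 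Finally, I would carefully verify the leading constants against the closed form for $E[TTR]$: the boundary cases (for instance $p_a=1$, where $Z\equiv0$ and $E[TTR]=1/p_b$) indicate that the comparison to the optimum is delicate and hinges on the exact value of $E[TTR]$ rather than on a loose upper estimate, so I would reconcile my computed expression with the stated bound before claiming the factor-two guarantee.
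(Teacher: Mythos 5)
Your reduction is sound and in fact sharper than what the paper does. The identity $TTR=\max(\mathrm{reach}_a,\mathrm{reach}_b)=m-Z$ is correct, $m$ is geometric with parameter $p_ap_b$, and your tail formula $\Pr[Z\ge j]=\bigl((1-p_a)(1-p_b)\bigr)^{j}$ checks out: summing $\Pr[m=k,\,Z\ge j]=(1-p_ap_b)^{k-1-j}\bigl((1-p_a)(1-p_b)\bigr)^{j}p_ap_b$ over $k\ge j+1$ gives exactly that. The paper takes a different route: it computes the \emph{marginal} per-round success probability $\Pr[\phi_i=1]=\frac{p_ap_b}{p_a+p_b-p_ap_b}$ (as $n\to+\infty$) and sets $E[TTR]$ equal to its reciprocal. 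That step silently treats the per-round rendezvous events as independent, which they are not in a stable environment (the channel pattern is fixed, so failing early makes failing again more likely). Your route avoids this and yields the exact value $E[TTR]=\frac{1}{p_ap_b}-\frac{(1-p_a)(1-p_b)}{p_a+p_b-p_ap_b}$.

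The genuine obstacle is precisely the one you flagged at the end: this exact value cannot be ``bounded above by the claimed expression.'' For $p_a=p_b=p$ it exceeds $\frac{2-p}{p}$ by $\frac{2(1-p)^3}{p^2(2-p)}>0$ for every $p<1$; at $p=0.6$ it equals about $2.587$ versus the claimed $2.333$, and the paper's own Table~\ref{table:exp_performace} reports $2.585$--$2.599$ for Strategy B, corroborating your formula rather than the theorem's inequality. So your final step would fail not through any flaw in your argument but because the stated bound is not actually an upper bound on the true expectation; the paper reaches it only via the invalid independence assumption. What does survive from your computation is the weaker $E[TTR]\le E[m]=\frac{1}{p_ap_b}$, which is within a factor $2$ of the $\frac{1}{\min(p_a,p_b)}$ lower bound only when $\max(p_a,p_b)\ge\frac{1}{2}$, and the constant-factor optimality claim degrades as both probabilities shrink since then $E[TTR]\sim\frac{1}{p_ap_b}$. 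On the lower-bound side your worry is misplaced: the argument in Appendix~\ref{app:azar} is a per-round bound $\Pr[\phi_t=1]\le\min(p_a,p_b)$ relying only on the independence of Alice's choice from Bob's channel states, so it applies verbatim to synchronous time-adaptive strategies and no new adversarial argument is needed there.
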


\begin{proof}
At $i$-th round, if Alice chooses channel $i+k$ \sjqxzlhy{where
$0\leq k \leq n-i$}, it means \sjqxzlhy{that channels
$i,i+1,\ldots,i+k-1$ are all closed and that channel $i+k$ is open}.
In addition, as in a stable environment, the channel status at the
$i$-th round is the same as that in the first round. Therefore, the
\sjqxzlhy{probability} that Alice chooses channel $i+k$ on the
$i$-th round is $P^a_{i,k}=(1-p_a)^kp_a$.
Similarly, Bob chooses channel $i+k$ with \sjqxzlhy{probability} $P^b_{i,k}=(1-p_b)^kp_b$.
Thus, on the $i$-th round, the rendezvous \sjqxzlhy{probability} is
\sjqxzlhy {
\vspace{-2mm}\begin{eqnarray}
Pr[\phi_i=1] &=& \sum_{k=0}^{n-i}P^a_{i,k}P^b_{i,k}=p_ap_b\sum_{k=0}^{n-i}(1-p_a)^k(1-p_b)^k \nonumber \\
&=& \hstan{p_ap_b
\cdot\frac{1-((1-p_a)(1-p_b))^{n-i+1}}{1-(1-p_a)(1-p_b)} \label{eqn:phii1}}\\
&=& \hstan{\frac{p_ap_b}{p_a+p_b-p_ap_b}, \quad \textrm{when~} n\rightarrow
+\infty.} \label{eqn:phii2}
\end{eqnarray}
} According to Eqn~\sjqxzlhy{(\ref{eqn:phii2})}, when $n\rightarrow
+\infty$, the rendezvous possibility of each round converges to a
constant\footnote{In fact, we can see from
\sjqxzlhy{Eqn~(\ref{eqn:phii1}) that}, even when $n$ is a
\sjqxzlhy{small finite integer}, e.g. $n=20$, $\phi_i$ is
\sjqxzlhy{still close to the derived constant}. Simulations will
demonstrate the efficiency of our strategy when there are finite
channels.}. Without loss of generality, we set $p_a\leq p_b$. We can
\sjqxzlhy{estimate} the expected TTR as $E[TTR]= \frac{p_a+p_b-p_ap_b}{p_ap_b}=\frac{p_a(1-p_b)+p_b}{p_ap_b}
\leq\frac{p_b(1-p_b)+p_b}{p_ap_b}=\frac{2-p_b}{p_a}<\frac{1}{\min(p_a,p_b)}\times 2$.
 Further, as mentioned in
Appendix~\ref{app:azar}, \hstan{a trivial lower bound of $E(TTR)$ is $\frac{1}{\min(p_a,p_b)}$}. Thus, Strategy B is
a 2-approximation to the optimal.
\end{proof}



\vspace{-6mm}\section{Dynamic Environment}\label{sec:dynamic}
As mentioned before, the channel availability
for secondary users actually is dynamic in time. In this section, we
study the rendezvous strategies for SUs in dynamic environments by starting
with two special cases and then investigating the general cases.

\vspace{-2mm}\subsection{Special Case 1: semi-stable
environment}\label{sec:dynmaic:semi-stable}


In the first special case, for Alice and Bob, once a channel is
open (closed) at one time slot, it will definitely
change its status to close (open) at the next time slot. We call this case the \emph{semi-stable environment}.
One common property between semi-stable and stable
environments is that, once the status of a channel at a
time slot is known, we can correctly compute its status at any other time.

In a semi-stable environment, if we only consider the odd or even rounds,
it is equivalent to a stable environment. Therefore, if there is no common
clock, Strategy A using geometric distribution is still essentially
optimal, since its expected TTR is at most twice the time
in \sjqxzlhy{a} stable environment, which is $O(\frac{1}{p_ap_b})$. Similarly, when there is a common clock, we can modify Strategy
B a bit and get Strategy $\widetilde{B}$, as follows:
\begin{table}[htpb]\centering
\begin{tabular}{p{0.95\textwidth}}
 \hline
Strategy $\widetilde{B}$:\\
\hline \sjqxzlhy{In} the $i$-th round ($i\geq 1$),\\
 \quad $\mathcal {S}_a$:  \sjqxzlhy{Alice chooses her first local open channel from channel $\lceil \frac{i}{2}\rceil$ to $n$; that is, she chooses channel $a^*=\min\{a~|~A_a^{i}=1;\lceil \frac{i}{2}\rceil\leq a\leq n\}$.}\\
\quad $\mathcal {S}_b$:  \sjqxzlhy{Bob chooses his first local open
channel from channel $\lceil \frac{i}{2}\rceil$ to $n$; that is, he
chooses channel $b^*=\min\{b~|~B_b^{i}=1;\lceil
\frac{i}{2}\rceil\leq b\leq n\}$.}
\\
\hline
\end{tabular}
\end{table}

\hstan{Strategy $\widetilde{B}$ is still based on the waiting-to-meeting scheme. Its expected TTR
is at most twice the time of Strategy B in a stable environment.} Therefore, \sjqxzlhy{ the following corollary is straightforward.}

\begin{corollary}
\hstan{In a semi-stable environment, when $n\rightarrow +\infty$, 1) Strategy A  achieves essentially optimal expected TTR in asynchronous systems; and 2) Strategy $\widetilde{B}$ is 4-approximation in synchronous systems.}
\end{corollary}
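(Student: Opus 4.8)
The plan is to reduce both claims to the already-established facts about the \emph{stable} environment (Theorem~1, Theorem~\ref{thm:perfB}, and the stated optimality of Strategy A) by exploiting the one structural feature of the semi-stable case: since every channel flips deterministically each slot, its status at all odd slots is identical and its status at all even slots is identical. Hence, looking only at the odd rounds, one sees a stable environment in which channel $i$ is open for Alice iff $A_i^1=1$ (probability $p_a$) and for Bob iff $B_i^1=1$ (probability $p_b$); looking only at the even rounds one again sees a stable environment, but with the complementary statuses, so the effective open probabilities there are $1-p_a$ and $1-p_b$. I would first record this ``parity decomposition'' as the backbone of both arguments, since it is what makes ``consider only the odd or even rounds'' precise.

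For part~(2), the first step is the index bookkeeping for Strategy $\widetilde B$: in round $i$ the start channel is $\lceil i/2\rceil$, so real rounds $2j-1$ and $2j$ share the start channel $j$. Consequently the odd-round subsequence of $\widetilde B$ is \emph{exactly} Strategy B executed in the odd (stable) environment, advancing its start channel by one per B-round while consuming two real rounds per B-round; the same holds for even rounds in the even environment. I would then invoke Theorem~1 to guarantee that, in whichever parity carries the first common open channel $m$, rendezvous occurs within $m$ B-rounds, i.e.\ within $2m$ real rounds, and Theorem~\ref{thm:perfB} to bound the expected number of B-rounds by $\frac{2-\max(p_a,p_b)}{\min(p_a,p_b)}$ as $n\to\infty$. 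The factor-$2$ time dilation, together with the fact that Strategy B is itself a $2$-approximation, then yields the claimed $4$-approximation.

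For part~(1), I would use that Strategy A is stationary, so its behaviour in any round depends only on the currently open channels; restricted to odd rounds it is therefore Strategy A running in the odd stable environment. Since Strategy A attains $E[TTR]=\Theta(1/(p_ap_b))$ there, the expected number of odd rounds until rendezvous is $\Theta(1/(p_ap_b))$, and the real TTR is at most twice this, hence still $O(1/(p_ap_b))$. For optimality I would transfer the stable-case lower bound: within each parity the open events have probabilities $p_a,p_b$ (resp.\ $1-p_a,1-p_b$) and are independent across the two users, so in every slot the probability that Alice's and Bob's chosen channels coincide and are commonly open is still $O(p_ap_b)$ for any stationary strategy, giving $\Omega(1/(p_ap_b))$ and hence essential optimality.

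The main obstacle I anticipate is the interaction between the two parities and the resulting effective probabilities. When the only common channel sits in the even parity, the relevant open probabilities become $1-p_a$ and $1-p_b$, so the bound from Theorem~\ref{thm:perfB} must be read with these substituted; I would then need to argue that the approximation ratio is nonetheless preserved, i.e.\ that the lower bound on the semi-stable optimum degrades in the same way, keeping the ratio at $4$ (resp.\ essentially $1$ for Strategy A). Making this uniform requires verifying that the trivial lower bound $\tfrac{1}{\min(p_a,p_b)}$ of the stable analysis indeed carries over to the semi-stable environment, and, for the asynchronous claim, conditioning cleanly on which parity contains the first common open channel so that the two halves of the argument combine without double counting.
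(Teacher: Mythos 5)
Your proposal follows essentially the same route as the paper: the paper's entire justification for this corollary is the observation that restricting attention to the odd (or even) rounds yields a stable environment, so Strategy A and Strategy $\widetilde{B}$ each pay at most a factor-$2$ time dilation over their stable-environment guarantees, giving $O(1/(p_ap_b))$ for A and the $4$-approximation for $\widetilde{B}$. The only difference is that you spell out the round-indexing bookkeeping and explicitly flag the even-parity subtlety (effective open probabilities $1-p_a$, $1-p_b$ when the only common channel lies in the even parity) --- a point the paper passes over in silence --- so your version is, if anything, more careful than the original.
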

\vspace{-2mm}\subsection{Special Case 2: independent dynamic
environment}\label{sec:dynamic:independent}

We come to another extreme special case, where for Alice and Bob, the event that a
channel is open at time $t$ is independent from the status of the same
channel at time $t'$ for any $t'<t$. Thus, we call this case \sjqxzlhy{the}
\emph{independent dynamic environment}. Recall that we assume at each round
a channel of Alice (Bob) has a probability of $p_a $ ($p_b$) to be open.

In \sjqxzlhy{an} independent dynamic environment, we give the following simple stationary \sjqxzlhy{strategy}, called Strategy C: at each round, Alice chooses her first local open channel, and Bob does similarly.
Note that Strategy C does not need a common clock. In addition, Strategy C
can not be applicable to stable environments, since Alice and Bob can never
achieve rendezvous successfully when the first open channels of them are
not common, even if there are other common channels. Similarly, it
can not be used in a semi-stable case.

For channel $i$, Alice will select it only if it is open and all the
channels before it are closed. At any time slot, for Alice, all her
channel status will be refreshed with open possibility of $p_a$.
Therefore, at any time slot, Alice select channel $i$ with
probability of $p_a(1-p_a)^{i-1}$. A similar conclusion can be
achieved for Bob. Further, at any round, the rendezvous probability
of Alice and Bob on channel $i$ is
$p_ap_b(1-p_a)^{i-1}(1-p_b)^{i-1}$. Thus, the following theorem
indicating the performance of Strategy C can be easily obtained.
\begin{theorem}\label{thm:perfC}
At any round \sjqxzlhy{$t$}, the rendezvous probability of Strategy C
is
\vspace{-2mm}\begin{equation}\label{eqn:14}
\hstan{Pr[\phi_t=1]=\sum^n_{i=1}p_ap_b(1-p_a)^{i-1}(1-p_b)^{i-1}\sjqxzlhy{,}}
\end{equation}\vspace{-2mm}
\sjqxzlhy{which is actually independent of $t$.} When $n\rightarrow
+\infty$, the expected TTR is $E[TTR]=\frac{1}{Pr[\phi_t=1]}=\frac{1}{p_a}+\frac{1}{p_b}-1.$
\end{theorem}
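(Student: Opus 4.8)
The plan is to establish Theorem~\ref{thm:perfC} in two pieces, exactly matching the two assertions in its statement: first the per-round rendezvous probability formula~(\ref{eqn:14}), and then the closed form for $E[TTR]$ as $n\rightarrow+\infty$.

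First I would verify the single-round rendezvous probability. The preamble to the theorem has already supplied the key fact: in the independent dynamic environment, each channel's status is refreshed independently every slot with open probability $p_a$ for Alice, so the probability that Alice selects channel $i$ (her first open channel) is $p_a(1-p_a)^{i-1}$, namely channel $i$ is open while all channels $1,\ldots,i-1$ are closed. The same holds for Bob with $p_b$. Since Alice's and Bob's channel states are mutually independent, the probability that both independently land on the \emph{same} channel $i$ is the product $p_ap_b(1-p_a)^{i-1}(1-p_b)^{i-1}$. A successful rendezvous at round $t$ is exactly the disjoint union over $i=1,\ldots,n$ of these events, so summing gives~(\ref{eqn:14}). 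Crucially, because every slot is an independent fresh copy with the same open probabilities, this expression carries no dependence on $t$, which is the claimed time-independence.

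Next I would compute the expected TTR. Since the per-round success probability $q:=Pr[\phi_t=1]$ is the same constant at every round and, by the independence-across-time assumption, the outcomes of distinct rounds are independent, the number of rounds until the first success is a geometric random variable with parameter $q$. Hence $E[TTR]=1/q$. It then remains to evaluate $q$ in the limit $n\rightarrow+\infty$: the sum $\sum_{i=1}^{\infty}p_ap_b\bigl((1-p_a)(1-p_b)\bigr)^{i-1}$ is a geometric series with ratio $(1-p_a)(1-p_b)\in[0,1)$, so it converges to $p_ap_b/\bigl(1-(1-p_a)(1-p_b)\bigr)=p_ap_b/(p_a+p_b-p_ap_b)$. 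Taking the reciprocal and simplifying $(p_a+p_b-p_ap_b)/(p_ap_b)$ yields $\frac{1}{p_a}+\frac{1}{p_b}-1$, as stated.

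The calculations here are entirely routine geometric-series manipulations, essentially mirroring Eqns~(\ref{eqn:phii1})--(\ref{eqn:phii2}) from the proof of Theorem~\ref{thm:perfB}. The only point that genuinely deserves care — and the one I would flag as the crux — is justifying the step $E[TTR]=1/q$. This requires that the rounds be \emph{independent} Bernoulli trials with a \emph{common} success probability, not merely that each round has probability $q$. Both properties follow from the defining assumption of the independent dynamic environment (each channel's availability at time $t$ is independent of its status at all earlier times, and each round's open probabilities are the fixed $p_a,p_b$), so the sequence of rendezvous indicators $\phi_1,\phi_2,\ldots$ is i.i.d.\ Bernoulli$(q)$ and the geometric-distribution expectation applies cleanly. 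I would state this independence explicitly before invoking $E[TTR]=1/q$, since it is precisely the feature that distinguishes this case from the stable and semi-stable environments, where round outcomes are correlated and the argument would fail.
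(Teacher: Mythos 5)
Your proposal is correct and follows essentially the same route as the paper, whose ``proof'' is the paragraph immediately preceding the theorem: compute Alice's and Bob's per-channel selection probabilities, multiply and sum to get Eqn~(\ref{eqn:14}), then take the reciprocal of the geometric-series limit. Your explicit justification that the round outcomes are i.i.d.\ (so that $E[TTR]=1/q$ is legitimate) is a point the paper leaves implicit, and is a worthwhile addition rather than a deviation.
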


Further, the following theorem infers that our strategy is exactly
the optimal one.
\begin{theorem}\label{thm:upperbound}
In \sjqxzlhy{an} independent dynamic case, no matter
\sjqxzlhy{whether} there is a common clock or not, the rendezvous
of any strategy at any time slot $t$ satisfies $Pr[\phi_t=1] \leq \sum^n_{i=1}p_ap_b(1-p_a)^{i-1}(1-p_b)^{i-1}$.
\end{theorem}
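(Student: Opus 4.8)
The plan is to reduce the statement to a purely combinatorial optimization over the two players' selections at the single slot $t$, and then to solve that optimization exactly. The starting observation is that, whatever (possibly randomized, possibly time-adaptive) strategies the players use, their channel choices at time $t$ — call them $c_A$ and $c_B$ — are functions of disjoint collections of independent random variables: $c_A$ depends only on Alice's observed vectors $A^1,\dots,A^t$ together with her private coins, and $c_B$ only on $B^1,\dots,B^t$ together with Bob's coins. Hence $c_A$ and $c_B$ are independent. A rendezvous at $t$ occurs exactly when $c_A=c_B=i$ and channel $i$ is open for both, so writing $\alpha_i:=Pr[c_A=i,\ A_i^t=1]$ and $\beta_i:=Pr[c_B=i,\ B_i^t=1]$, independence gives $Pr[\phi_t=1]=\sum_{i=1}^n\alpha_i\beta_i$. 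I would stress that this factorization is exactly where the ``with or without a common clock'' clause is settled: since in the independent dynamic environment $A^t$ is independent of the whole past, conditioning on earlier observations cannot help, so time-adaptive (clock) strategies and stationary (no-clock) strategies obey the same single-slot description.

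The next step is to pin down which vectors $(\alpha_i)$ and $(\beta_i)$ are realizable. The naive bounds $\alpha_i\le p_a$ are true but far too weak; the correct constraints are obtained by considering every subset $S\subseteq\{1,\dots,n\}$: since $c_A\in S$ together with the chosen channel being open forces at least one channel of $S$ to be open for Alice, $\sum_{i\in S}\alpha_i \le Pr[\exists i\in S:\ A_i^t=1]=1-(1-p_a)^{|S|}$, and symmetrically $\sum_{i\in S}\beta_i\le 1-(1-p_b)^{|S|}$. Thus $(\alpha_i)$ and $(\beta_i)$ live in polymatroids $P_a,P_b$ whose rank functions depend only on $|S|$ and are concave in $|S|$. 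The theorem therefore reduces to showing that $\max\{\sum_i\alpha_i\beta_i:\alpha\in P_a,\ \beta\in P_b\}$ equals $\sum_{i=1}^n p_ap_b(1-p_a)^{i-1}(1-p_b)^{i-1}$, the value attained by Strategy C via Theorem~\ref{thm:perfC}.

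For the optimization I would exploit that both rank functions are symmetric under permuting coordinates. Taking any optimal pair and applying the rearrangement inequality (both sorted sequences are still feasible), I may assume $\alpha_1\ge\cdots\ge\alpha_n$ and $\beta_1\ge\cdots\ge\beta_n$, in which case the subset constraints reduce to the prefix ones $\sum_{i\le k}\alpha_i\le 1-(1-p_a)^k$ and $\sum_{i\le k}\beta_i\le 1-(1-p_b)^k$. Now fix $\beta$ and maximize the linear objective $\sum_i\alpha_i\beta_i$ over $\alpha\in P_a$: since the weights $\beta_i$ are nonincreasing, the greedy rule for a polymatroid returns the marginal increments $\alpha_i=(1-(1-p_a)^i)-(1-(1-p_a)^{i-1})=p_a(1-p_a)^{i-1}$, i.e.\ exactly Strategy C's profile. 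Substituting this and maximizing over $\beta\in P_b$ (whose weights $p_a(1-p_a)^{i-1}$ are again nonincreasing) yields $\beta_i=p_b(1-p_b)^{i-1}$ by the same argument, giving the optimal value $\sum_i p_ap_b(1-p_a)^{i-1}(1-p_b)^{i-1}$. To avoid invoking polymatroid greedy by name, the same conclusion follows from a direct exchange argument: whenever some prefix constraint is slack, moving a little selection mass from a later index to an earlier one is feasible and, because both sequences are nonincreasing, cannot decrease $\sum_i\alpha_i\beta_i$; iterating drives every prefix to its cap, which is precisely the Strategy C profile.

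The step I expect to be the main obstacle is the feasible-region analysis together with the joint bilinear maximization: one must realize that the per-coordinate bounds $\alpha_i\le p_a$ do not capture reality — a quick check (e.g.\ $p_a=p_b=\tfrac12$) shows they would permit rendezvous probabilities strictly larger than the true optimum — so the full family of subset inequalities is genuinely needed, and the objective being bilinear rather than linear is what forces the two-stage rearrangement/greedy argument instead of a one-shot bound. By contrast, the independence factorization, though it is the conceptual heart, is short once the temporal-refresh property of the independent dynamic environment is made explicit.
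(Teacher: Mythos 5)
Your proof is correct, and its skeleton is the same as the paper's: factor $Pr[\phi_t=1]$ into a product of Alice-only and Bob-only quantities using independence, sort both sequences (the paper invokes ``Abel's Inequality'', i.e.\ the rearrangement step you use), and then show the geometric profile $p(1-p)^{i-1}$ is extremal. Where you genuinely diverge is in how the last step is justified. The paper only derives the single constraint $\mathcal{S}_a^{t}(i_1)\leq p_a$ rigorously (via the $\epsilon$-argument) and then asserts inductively that each subsequent $\mathcal{S}_a^{t}(i_k)$ ``should be maximized on the premise that'' the earlier ones are maximal --- a greedy claim whose optimality for the bilinear objective is not actually proved there. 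You instead identify the full feasible region: the subset constraints $\sum_{i\in S}\alpha_i\leq 1-(1-p_a)^{|S|}$ defining a symmetric polymatroid, after which greedy optimality for a linear objective with sorted weights (or your exchange argument) is a theorem rather than an assertion. Your observation that the per-coordinate bounds $\alpha_i\leq p_a$ alone are insufficient (e.g.\ $p_a=p_b=\tfrac12$ would permit value $\tfrac12$ versus the true $\tfrac13$) pinpoints exactly the gap the paper's informal induction papers over. So your write-up is best viewed as a rigorous completion of the paper's argument; what it costs is the polymatroid machinery (or the explicit exchange argument), and what it buys is that the two-stage ``fix $\beta$, optimize $\alpha$; then fix $\alpha$, optimize $\beta$'' greedy actually certifies the global maximum of the bilinear program, which the paper's term-by-term reasoning does not establish on its own.
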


\begin{proof}
For any strategy, it is reasonable
\sjqxzlhy{without loss of generality to assume} that Alice and Bob
will always make a try by choosing an open channel per round. At
time $t$, the \sjqxzlhy{probability} that Alice and Bob chooses
channel $i$ is denoted as $\mathcal{S}_a^{t}(i)$ and
$\mathcal{S}_b^{t}(i)$ respectively. We have at any time $t$, $\sum^n_{i=1}\mathcal{S}_a^{t}(i)=
\sum^n_{i=1}\mathcal{S}_b^{t}(i)=1.$
The \sjqxzlhy{probability} that they can achieve rendezvous at time
$t$ is $Pr[\phi_t=1]=\sum^n_{i=1}\mathcal{S}_a^{t}(i)\mathcal{S}_b^{t}(i).$
Set $\{i_1,i_2,\ldots,i_n\}$ as a permutation of numbers from $1$ to
$n$ such that $\mathcal{S}_a^{t}(i_1)\geq \mathcal{S}_a^{t}(i_2)\geq\cdots\geq\mathcal{S}_a^{t}(i_n).$
Similarly, $\{j_1,j_2,\ldots,j_n\}$ is set such that $\mathcal{S}_b^{t}(j_1)\geq
\mathcal{S}_b^{t}(j_2)\geq\cdots\geq\mathcal{S}_b^{t}(j_n).$
\sjqxzlhy{By Abel's Inequality (or simple mathematical
manipulations)}, we have $Pr[\phi_t=1]=\sum^n_{i=1}\mathcal{S}_a^{t}(i)\mathcal{S}_b^{t}(i)
\leq  \sum^n_{k=1}\mathcal{S}_a^{t}(i_k)\mathcal{S}_b^{t}(j_k).$
At each round, each channel of Alice has an open probability of
$p_a$, so $\mathcal{S}_a^{t}(i_1)\leq p_a$. Set
$\mathcal{S}_a^{t}(i_1)=p_a-\epsilon$ $(\epsilon\geq 0)$, and
$\sum^{n}_{k=2}\epsilon_k=\epsilon$ $(\forall 2\leq k \leq n,
\epsilon_k \geq 0)$. Then, we have
\vspace{-2mm}\begin{eqnarray}\label{eqn:epsilon}
Pr[\phi_t=1] & \leq &
\sum^n_{k=1}\mathcal{S}_a^{t}(i_k)\mathcal{S}_b^{t}(j_k)=(p_a-\epsilon)\mathcal{S}_b^{t}(j_1)+\sum^n_{k=2}\mathcal{S}_a^{t}(i_k)\mathcal{S}_b^{t}(j_k) \nonumber \\
&=&(p_a-\sum^{n}_{k=2}\epsilon_k)\mathcal{S}_b^{t}(j_1)+\sum^n_{k=2}\mathcal{S}_a^{t}(i_k)\mathcal{S}_b^{t}(j_k) \nonumber \\
&\leq & \hstan{p_a\mathcal{S}_b^{t}(j_1)+\sum^n_{k=2}(\mathcal{S}_a^{t}(i_k)-\epsilon_k)\mathcal{S}_b^{t}(j_k)}.
\end{eqnarray}
The last step in \sjqxzlhy{Eqn~(\ref{eqn:epsilon})} is due to
$\mathcal{S}_b^{t}(j_1)\geq \mathcal{S}_b^{t}(j_k)$ $(k\geq 2)$.

According to \sjqxzlhy{Eqn~(\ref{eqn:epsilon})}, in order to
maximize the rendezvous probability for a round, we should set
$\epsilon_k=0$ ($k\geq 2$). So we have $\epsilon=0$ and
$\mathcal{S}_a^{t}(i_1)=p_a$. \sjqxzlhy{Hence}, to get the largest
rendezvous probability, we have to set $\mathcal{S}_a^{t}(i_1)$
maximal. Due to a similar argument, for Alice,
$\mathcal{S}_a^{t}(i_2)$ should be maximized on the premise that
$\mathcal{S}_a^{t}(i_1)$ is set \sjqxzlhy{to be} maximal. That is to
say, once channel $i_1$ is open, we choose channel $i_1$; otherwise,
we choose $i_2$ as long as it is open. Inductively, to achieve the
largest rendezvous probability, all $\mathcal{S}_a^{t}(i_k)$ $(k\geq
2)$ must be maximized with the premise that
$\mathcal{S}_a^{t}(i_1,),\mathcal{S}_a^{t}(i_2),...,\mathcal{S}_a^{t}(i_{k-1})$
are set \sjqxzlhy{to be} maximal. Therefore, we \sjqxzlhy{can} set $\mathcal{S}_a^{t}(i_k)=p_a(1-p_a)^{k-1}$ when $k\geq 1.$
We have the results for Bob similarly as $\mathcal{S}_b^{t}(i_k)=p_b(1-p_b)^{k-1}$ when $ k\geq 1$.
\sjqxzlhy{Thus Eqn~(\ref{eqn:epsilon})} can be further written as $Pr[\phi_t=1] =\sum^n_{i=1}\mathcal{S}_a^{t}(i)\mathcal{S}_b^{t}(i)
 \leq  \sum^n_{k=1}\mathcal{S}_a^{t}(i_k)\mathcal{S}_b^{t}(j_k)
 \leq  \sum^n_{i=1}p_ap_b(1-p_a)^{i-1}(1-p_b)^{i-1}.$
\end{proof}

Based on Theorem~\ref{thm:perfC} and Theorem~\ref{thm:upperbound},
the following corollary is straightforward.
\begin{corollary}
In \sjqxzlhy{the} independent dynamic environment, no matter
\sjqxzlhy{whether} there is a common clock or not, Strategy C is an
optimal strategy which achieves the minimum
expected TTR.
\end{corollary}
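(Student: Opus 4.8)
The plan is to combine the exact per-round success probability of Strategy~C (Theorem~\ref{thm:perfC}) with the universal per-round upper bound (Theorem~\ref{thm:upperbound}), and then to lift these single-round statements into a statement about the whole expected TTR. Write $P^\star=\sum_{i=1}^{n}p_ap_b(1-p_a)^{i-1}(1-p_b)^{i-1}$ for the common quantity appearing in both theorems. By Theorem~\ref{thm:perfC}, Strategy~C attains $\Pr[\phi_t=1]=P^\star$ at every round, and since the availability vectors are refreshed independently across rounds in the independent dynamic environment, its TTR is geometric with parameter $P^\star$, giving $E[TTR]=1/P^\star$ (which tends to $\frac{1}{p_a}+\frac{1}{p_b}-1$ as $n\to\infty$). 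It therefore suffices to show that no strategy, adaptive or stationary, clock-aware or not, can achieve $E[TTR]<1/P^\star$.

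First I would establish a \emph{conditional} version of the per-round bound. For an arbitrary strategy, let $\mathcal{H}_{t-1}$ denote the full history (all observed channel vectors and all choices) through the end of round $t-1$, and let $q_t=\Pr[\phi_t=1\mid\mathcal{H}_{t-1}]$ be the conditional rendezvous probability at round $t$. The crucial observation is that in this environment $A^t,B^t$ are drawn fresh, independently of the past, so conditioning on $\mathcal{H}_{t-1}$ leaves each channel open at round $t$ with the very same probabilities $p_a,p_b$. Consequently the entire argument of Theorem~\ref{thm:upperbound}---in particular the inequality $\mathcal{S}_a^{t}(i_1)\le p_a$ and its inductive refinement $\mathcal{S}_a^{t}(i_k)\le p_a(1-p_a)^{k-1}$---applies verbatim to the conditional choice distributions, yielding $q_t\le P^\star$ for every $t$ and every history.

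Second, I would chain these conditional bounds. Since $q_t\le P^\star$ holds pointwise over histories with no prior rendezvous, we get $\Pr[\phi_t=1\mid TTR>t-1]=E[q_t\mid TTR>t-1]\le P^\star$, hence $\Pr[TTR>t]\ge(1-P^\star)\,\Pr[TTR>t-1]$. By induction $\Pr[TTR>t]\ge(1-P^\star)^{t}$, and summing gives $E[TTR]=\sum_{t\ge0}\Pr[TTR>t]\ge 1/P^\star$. Matching this against the exact value $1/P^\star$ achieved by Strategy~C proves optimality; the presence or absence of a common clock is immaterial, because the lower bound was derived for every admissible strategy, including clock-aware adaptive ones.

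The step I expect to be the main obstacle is the second paragraph: Theorem~\ref{thm:upperbound} is literally stated only for the marginal probability $\Pr[\phi_t=1]$, so one must justify carefully that the same inequality survives conditioning on an arbitrary past. This is precisely where the defining feature of the independent dynamic environment---independence of the rounds---is invoked, and it is what rules out any gain from adaptivity. Once the conditional bound $q_t\le P^\star$ is secured, the geometric chaining and the final summation are routine.
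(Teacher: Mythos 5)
Your proposal is correct and follows the same route as the paper, which simply combines Theorem~\ref{thm:perfC} (Strategy~C attains the per-round probability $P^\star$) with Theorem~\ref{thm:upperbound} (no strategy exceeds $P^\star$ in any round) and declares the corollary straightforward. The extra work you do in the second and third paragraphs---upgrading the marginal bound of Theorem~\ref{thm:upperbound} to a conditional bound given the history, using the round-to-round independence of the environment, and then chaining $\Pr[TTR>t]\ge(1-P^\star)^{t}$ to get $E[TTR]\ge 1/P^\star$---is exactly the detail the paper omits, and it is needed to rule out adaptive strategies rigorously, so your version is if anything more complete than the paper's.
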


\subsection{General Cases in Dynamic Environment}
\subsubsection{Model}
We model the channel availability in general dynamic environments as a Markov process. At any time slot $t\geq 2$, for Alice, a channel that is open at time $t-1$ will become closed with probability $a_0$, \sjqxzlhy{and} a channel that is closed at $t-1$ will be open with probability $a_1$. $b_0$ and $b_1$ are similarly defined for Bob. As we assume at each round a channel of Alice (Bob) has a probability of $p_a $ ($p_b$) to be open, it is easy to obtain $p_a(1-a_0)+(1-p_a)a_1 = p_a$ and $p_aa_0+(1-p_a)(1-a_1) =1-p_a $.
Therefore, $a_0$ and $a_1$ should satisfy $a_0=a_1=0$, or $\frac{a_0}{a_1}=\frac{1-p_a}{p_a}$ where $ 0< a_0,a_1 \leq 1$.
Here, we can define a parameter $\lambda_a$ and set $a_0=\lambda_a(1-p_a)$,  $a_1=\lambda_ap_a$.
where $0\leq \lambda_a \leq \min(\frac{1}{p_a},\frac{1}{1-p_a})\leq
2$.

We call $\lambda_a$ the \emph{environment dynamic factor}
\sjqxzlhy{of} Alice. Similarly $\lambda_b$ is defined for Bob. We have $0\leq \lambda_b \leq
\min(\frac{1}{p_b},\frac{1}{1-p_b})\leq 2$, \sjqxzlhy{and} $b_0=\lambda_b(1-p_b)$, $b_1=\lambda_bp_b.$
Now, we can see the parameters $p_a$ and $p_b$ reflect the channel open probabilities at a round, and $\lambda_a$ and $\lambda_b$ reflect the dynamic of channel availability over rounds. Moreover, the stable environment discussed in
Section~\ref{sec:stable} is actually a special case when $\lambda_a=\lambda_b=0$. The semi-stable environment is the case that $\lambda_a=\lambda_b=2$, and the independent dynamic environment is $\lambda_a=\lambda_b=1$. The closer to $1$ the dynamic factor, the more dynamic the environment. We regard the
environment dynamic factors as constants, which is reasonable in
real applications.

\vspace{-2mm}\subsubsection{Performance of Strategy C in General Cases}
Similar to the argument in Appendix~\ref{app:azar}, as Alice and Bob do not know the channel
status of each other, the expected TTR for a
dynamic environment also satisfies
$E[TTR]=\Omega(\frac{1}{\min(p_a,p_b)})$. The following theorem gives an
asymptotically matching upper bound of the TTR
by analyzing the performance of Strategy C in general dynamic
environments when neither of the environment at Alice and Bob is stable or semi-stable, i.e., $ \lambda_a, \lambda_b \not \in \{0,2\}$. Its proof is deferred to Appendix~\ref{app:thm5} due to space limitations.

\begin{theorem}\label{thm:generalcase}
When $n\rightarrow +\infty$, the expected TTR of Strategy C in the dynamic environments where $\lambda_a,\lambda_b \notin \{0,2\}$ is $O\left(\ln\left(\frac{1}{\min(p_a,p_b)}\right)\cdot \left(\frac{1}{p_a}+\frac{1}{p_b}-1\right)\right),$
which is optimal up to a logarithmic factor $\ln(\frac{1}{\min(p_a,p_b)})$.
\end{theorem}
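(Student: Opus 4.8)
The plan is to reduce the temporally correlated dynamic process back to the independent analysis of Theorem~\ref{thm:perfC} through a mixing (regeneration) argument over suitably spaced rounds. Throughout write $p=\min(p_a,p_b)$ and let $\pi_0=\frac{p_ap_b}{p_a+p_b-p_ap_b}=\frac{1}{1/p_a+1/p_b-1}$ denote the stationary per-round rendezvous probability established in Theorem~\ref{thm:perfC}. Let $X_t$ and $Y_t$ be the first locally open channel of Alice and Bob at round $t$; under Strategy C a rendezvous occurs at round $t$ exactly when $X_t=Y_t$, so that $E[TTR]=E[\min\{t:X_t=Y_t\}]$. The difficulty compared with Theorem~\ref{thm:perfC} is that consecutive rounds are no longer independent: a near-stable environment keeps $X_t$ and $Y_t$ stuck, so the conditional rendezvous probability of a single round can be far below $\pi_0$.

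First I would record the per-channel mixing rate. For each fixed channel $i$, its availability for Alice evolves as a two-state Markov chain with transition probabilities $a_0=\lambda_a(1-p_a)$ and $a_1=\lambda_a p_a$, whose nontrivial eigenvalue is $1-\lambda_a$; since $\lambda_a\notin\{0,2\}$ we have $|1-\lambda_a|<1$, and likewise $|1-\lambda_b|<1$. Hence every single channel relaxes to its stationary law (open with probability $p_a$, resp.\ $p_b$) geometrically fast with a constant spectral gap $\gamma=\min(\lambda_a,\,2-\lambda_a,\,\lambda_b,\,2-\lambda_b)$, which the hidden $O(\cdot)$ constant is allowed to depend on.

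Next comes a truncation step. I would fix a prefix of $K=\Theta(1/p)$ channels chosen so that the stationary rendezvous probability confined to the prefix, $\sum_{i=1}^{K}p_ap_b(1-p_a)^{i-1}(1-p_b)^{i-1}$, is at least $\pi_0/2$; this is possible because the discarded tail is a geometric remainder bounded by $((1-p_a)(1-p_b))^{K}\le e^{-(p_a+p_b)K}$, which drops below $\pi_0/2$ once $K=\Omega(1/p)$. The event $\{X_t=Y_t\le K\}$ depends only on channels $1,\dots,K$, so it suffices to control the joint law of this prefix, which is a product of $K$ independent two-state chains each of gap $\gamma$. A union bound over the $K$ coordinates shows the prefix reaches total-variation distance $\epsilon$ from its stationary product law within $L=O\!\left(\gamma^{-1}\log(K/\epsilon)\right)$ rounds; taking $\epsilon=\pi_0/4=\Theta(p)$ and $K=\Theta(1/p)$ yields $L=O(\log(1/p))=O\!\left(\ln\frac{1}{\min(p_a,p_b)}\right)$. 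This union bound over the $\Theta(1/p)$ relevant channels (a coupon-collector effect) is precisely where the logarithmic factor enters, and establishing it together with the uniform-over-history guarantee below is the step I expect to be the crux.

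Finally I would place \emph{checkpoints} at rounds $L,2L,3L,\dots$ and extract a geometric tail. Conditioned on the whole history up to round $(k-1)L$, the Markov property makes the prefix configuration at round $kL$ the image of a point mass after $L$ transitions, hence within total variation $\epsilon$ of the stationary product law; therefore $\Pr[X_{kL}=Y_{kL}\le K\mid\mathcal F_{(k-1)L}]\ge\pi_0/2-\epsilon\ge\pi_0/4$, uniformly in the history. Letting $N=\min\{k:X_{kL}=Y_{kL}\le K\}$, iterating this bound gives $\Pr[N>m]\le(1-\pi_0/4)^m$, so $E[N]\le4/\pi_0$ and $E[TTR]\le L\cdot E[N]=O\!\left(\log\frac1p\cdot\frac1{\pi_0}\right)=O\!\left(\ln\frac{1}{\min(p_a,p_b)}\cdot\left(\frac1{p_a}+\frac1{p_b}-1\right)\right)$. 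Optimality up to the stated factor then follows from the lower bound $E[TTR]=\Omega(1/\min(p_a,p_b))$ recalled just before the theorem, together with the elementary sandwich $\frac{1}{\min(p_a,p_b)}\le\frac1{p_a}+\frac1{p_b}-1\le\frac{2}{\min(p_a,p_b)}$, which shows the upper bound exceeds the lower bound by at most the factor $\ln\frac{1}{\min(p_a,p_b)}$.
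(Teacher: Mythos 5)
Your proposal is correct and follows essentially the same route as the paper's proof: both restrict attention to checkpoint rounds spaced $\Theta\left(\ln\frac{1}{\min(p_a,p_b)}\right)$ apart, use the spectral gap $|1-\lambda|<1$ of each channel's two-state chain to argue that the channel law at those rounds is nearly stationary, lower-bound the per-checkpoint rendezvous probability by a constant times the independent-case probability from Theorem~\ref{thm:perfC}, and multiply by the spacing. The only cosmetic difference is how closeness to stationarity is quantified --- the paper uses a pointwise multiplicative bound $|\mathcal{P}_i^{K+R}-p_a|\leq 0.001p_a$ on each channel's marginal, while you use total-variation distance of a truncated prefix of $\Theta(1/\min(p_a,p_b))$ channels --- and both give the same logarithmic spacing and the same final bound.
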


%
%
\vspace{-4mm}\section{Simulation}\label{sec:sim}

In this section, we will carry out numerous simulations to
demonstrate the efficiency of our strategies in different system
settings, which validate our theoretical analysis. We also try to
exploit the impacts of different parameters on performance of the
strategies, such as the number of channels $n$, the channel open
possibilities $p_a$ and $p_b$, the environment dynamic factors $\lambda_a$ and $\lambda_b$, and the intruder.

Our simulation includes $4$ different strategies, Strategy A, B, C,
and random, which means Alice and Bob both choose an open channel
randomly at each round.  For simplicity, we set $\lambda_a=\lambda_b=\lambda$ and
$p_a=p_b=p$. For each set of the parameters $n$, $p$ and $\lambda$,
we run simulations \hstanq{ a large number of times} for each strategy,  and take the
average TTR as $E(TTR)$. To make sure there is at least one common channel in stable environments, we check each case generated and drop the ones of no common channels. For dynamic environments, we allow no common channels temporarily in some time slots.


During the simulation, according to the real number of channels in white spaces, we set $n\in [20,100]$ and mostly focus on $n\in\{20,30,50\}$. Moreover, we simulate more extensively the cases with a large $p$, i.e., $p\geq 0.6$, because 1) it is assumed that the channels in white spaces have a high probability to open for SUs; and 2) it will guarantee there are common channels with a high probability when $n$ is relatively small.
\subsection{Performance of the Strategies}

We first describe the performance of the $4$ different strategies. Table~\ref{table:exp_performace} shows the results with different
settings of $n$, $\lambda$ and $p$. Recall that Strategy C is not
applicable to stable environments ($\lambda=0$) (Refer to
Section~\ref{sec:dynamic:independent}).


\begin{table}[t]
\centering \caption{Performance of the
Strategies}\label{table:exp_performace}
\begin{tabular}{p{0.07\textwidth}|p{0.07\textwidth}|p{0.06\textwidth}|p{0.06\textwidth}|p{0.06\textwidth}||p{0.07\textwidth}|p{0.07\textwidth}|p{0.06\textwidth}|p{0.06\textwidth}|p{0.06\textwidth}}
\hline
 Setting $n=20$ & Strategy & $p = 0.6$ & $p = 0.75$ & $p = 0.9$ &Setting $n=50$ & Strategy & $p = 0.6$ & $p = 0.75$ & $p = 0.9$\\
 \hline \hline
 \multirow{3}{*}{$\lambda = 0$} & Random & 21.095 & 20.492 & 20.090 & \multirow{3}{*}{$\lambda = 0$} & Random & 50.227 & 50.430 & 50.518 \\
 \cline{2-5} \cline{ 7-10}
 & A & 39.877 & 21.490 & 14.087 & &A & 34.466 & 21.196 & 14.006 \\
 \cline{2-5} \cline{ 7-10}
 & B & \textbf{2.599} & \textbf{1.716} & \textbf{1.214} && B & \textbf{2.585} & \textbf{1.705} & \textbf{1.224} \\
 \hline
 \multirow{4}{*}{ $\lambda = 0.1$} & Random & 20.421 & 19.843 & 19.793 & \multirow{4}{*}{ $\lambda = 0.1$} & Random & 49.552 & 49.555 & 50.544\\
 \cline{2-5} \cline{ 7-10}
 & A & 35.566 & 20.715 & 13.750 && A & 32.206 & 20.627 & 14.158 \\
 \cline{2-5} \cline{ 7-10}
 & B & \textbf{2.534} & \textbf{1.697} & \textbf{1.228} && B & \textbf{2.531} & \textbf{1.694} & \textbf{1.222}\\
 \cline{2-5} \cline{ 7-10}
 & C & 9.111 & 5.564 & 2.999 && C & {8.956} & {5.808} & {2.864}  \\
 \hline
 \multirow{4}{*}{ $\lambda = 1$} & Random & 20.052 & 20.068 & 19.988 &\multirow{4}{*}{ $\lambda = 1$} & Random & 49.616 & 49.239 & 49.072 \\
 \cline{2-5} \cline{ 7-10}
 & A & 35.180 & 21.022 & 13.899 && A & 32.379 & 20.191 & 13.833  \\
 \cline{2-5} \cline{ 7-10}
 & B & 2.311 & 1.681 & 1.223&& B & 2.311 & 1.672 & 1.236 \\
  \cline{2-5} \cline{ 7-10}
 & C & \textbf{2.308} & \textbf{1.677} & \textbf{1.217}&& C & \textbf{2.299} & \textbf{1.658} & \textbf{1.222} \\
 \hline 
\end{tabular}
\begin{tablenotes}
\item [1] Remarks: As an example, the first entry $21.095$ means that, when $n=20$, $\lambda=0$ and $p=0.6$, the
expected TTR of the random strategy is $21.095$ rounds.
\end{tablenotes}
\end{table}


\vspace{-2mm}\begin{figure}[hbpt] 
\begin{minipage}[htbp]{0.48\textwidth}
\centering \epsfig{file=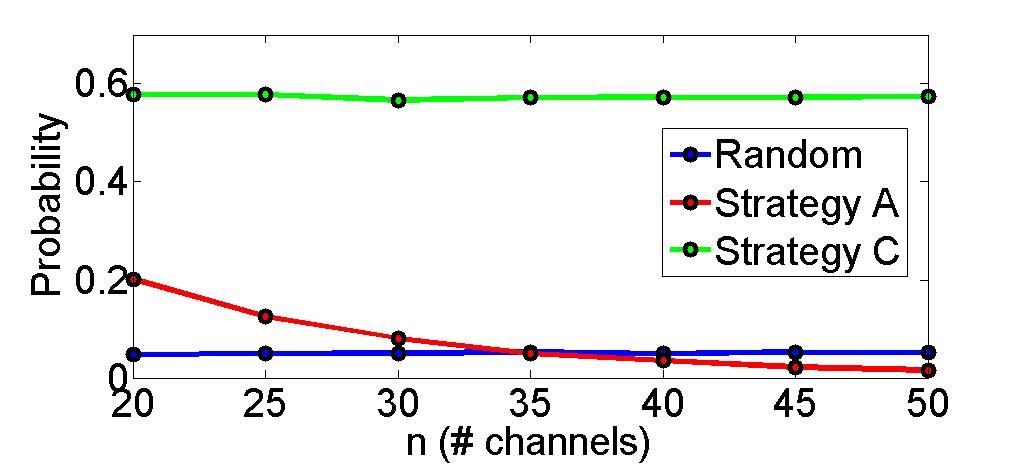, width=0.8\linewidth}
\end{minipage}
\begin{minipage}[htbp]{0.48\textwidth}
\caption{\label{fig:fail}The
probability of strategies with $E(TTR)\geq 3n$ in stable environments ($p = 0.6$).}\end{minipage}
\end{figure}\vspace{-2mm}

%


\begin{figure*}[t] 
\centering \subfigure[Strategy A
($\lambda=0$)]{\epsfig{file=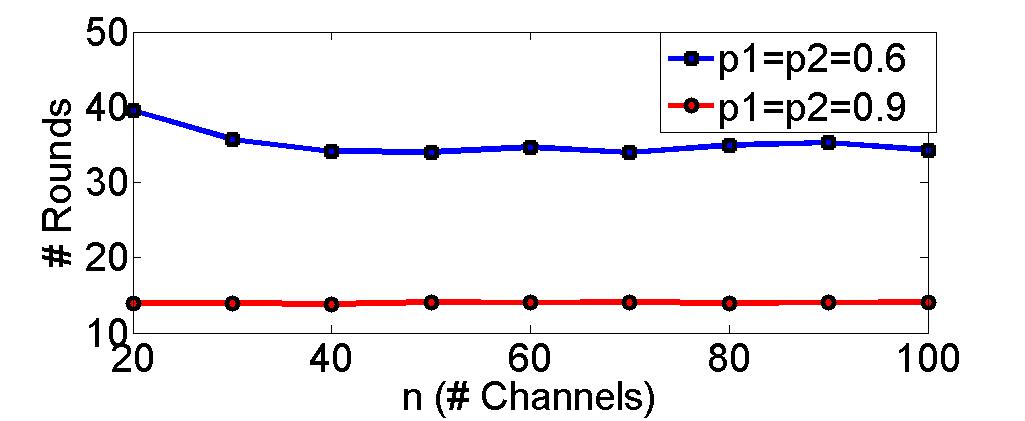, width=0.3\linewidth}}
\subfigure[Strategy B ($\lambda=0$)]{\epsfig{file=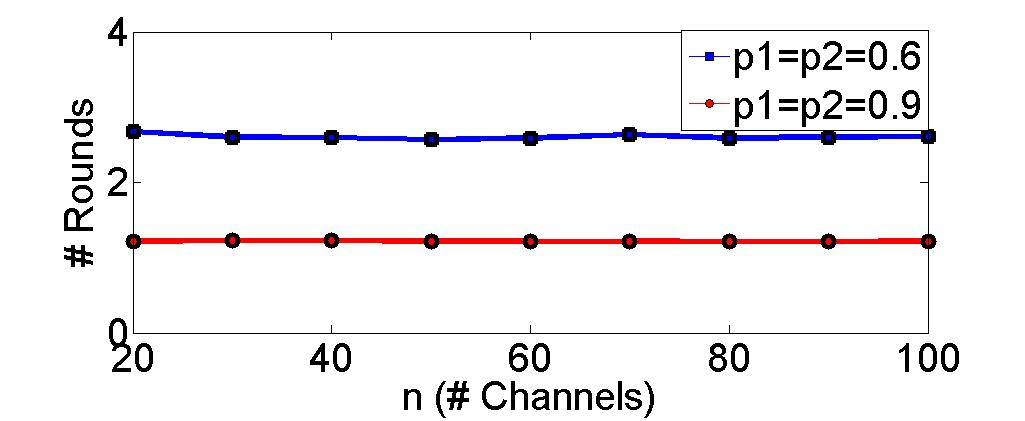,
width=0.3\linewidth}} \subfigure[Strategy C
($\lambda=1$)]{\epsfig{file=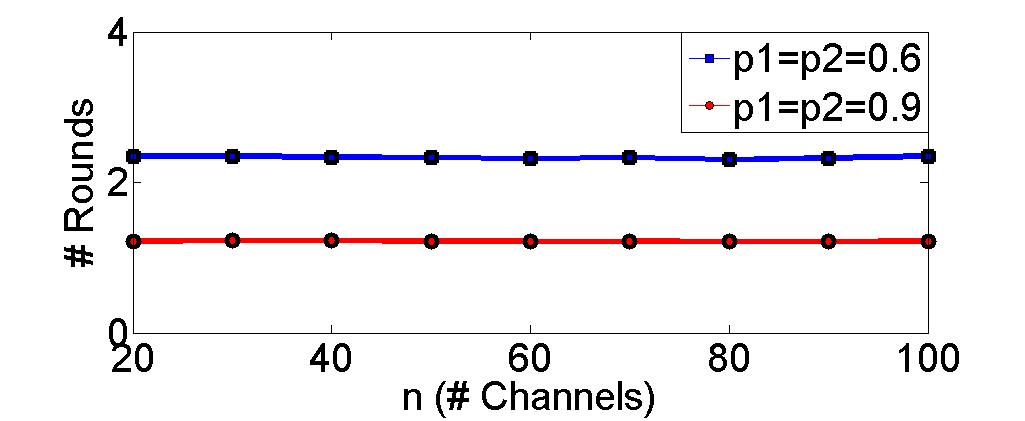, width=0.3\linewidth}}
\vspace{-2mm}\caption{\label{fig:exp_n}Performance of strategies with different
numbers of channels.}
\end{figure*}


\begin{figure*}[t] 
\centering \subfigure[Strategy A
($\lambda=0$)]{\epsfig{file=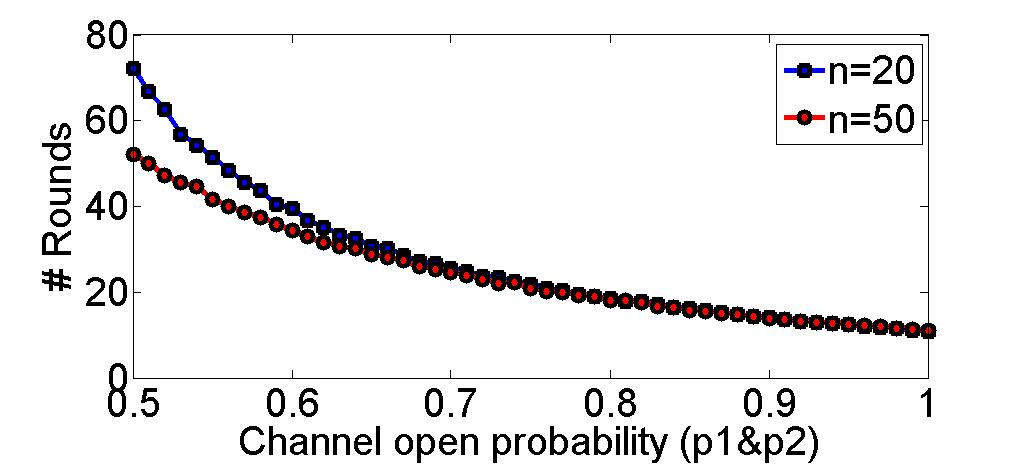, width=0.3\linewidth}}\label{fig:exp_p:A}
\subfigure[Strategy B ($\lambda=0$)]{\epsfig{file=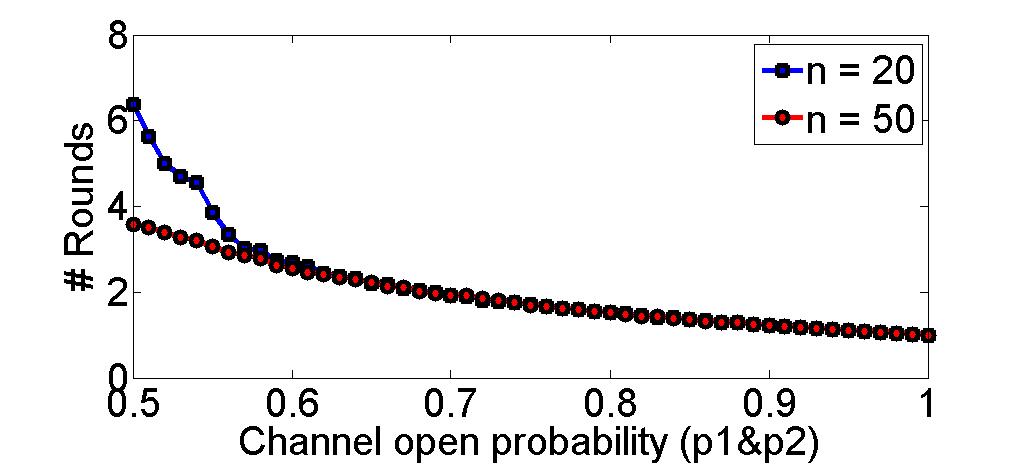,
width=0.3\linewidth}} \label{fig:exp_p:B}
\subfigure[Strategy C
($\lambda=1$)]{\epsfig{file=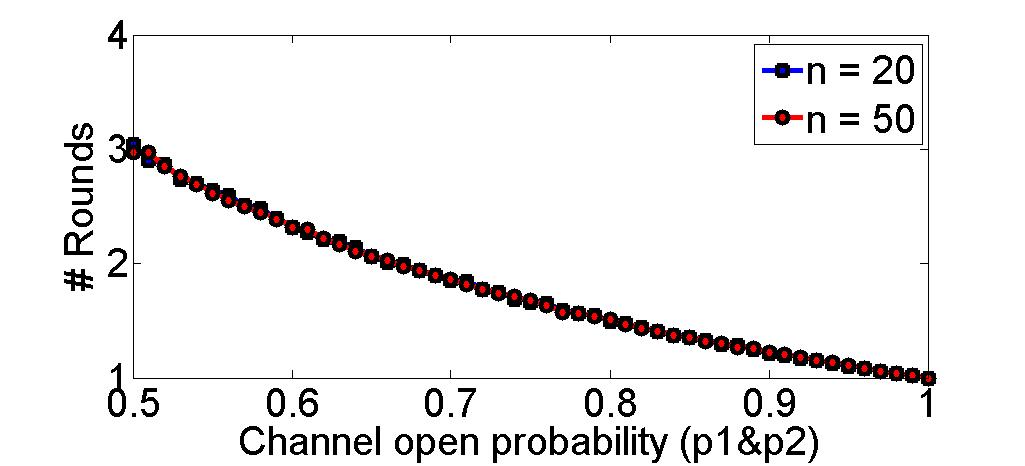, width=0.3\linewidth}}\label{fig:exp_p:C}
\vspace{-2mm}\caption{\label{fig:exp_p}Performance of strategies with different
channel open probabilities.}
\end{figure*}\vspace{-2mm}

We can find that our Strategy B can achieve a fast rendezvous and significantly outperforms the random strategy (Random) and Strategy A in stable environments ($\lambda=0$). However,
recall that Strategy B needs a common clock. In independent dynamic
environments ($\lambda=1$), Strategy C is optimal which validates
our theoretical analysis.

Another fact is that, although Strategy A is proved to be  optimal in
asynchronous stable environments~\cite{esaAzar}, it performs poorly in
dynamic environments compared with C which does not need a common
clock either, even when $\lambda$ is as small as $0.1$.

\hstanq{In addition, in stable environments, Strategy A may have a bad performance, worse than the random strategy, when $n$ is small (e.g., $n=20$). It is due to the property of the geometric distribution used in Strategy A.  We study the the probabilities of $E(TTR)\geq 3n$
for Strategies A, C and Random\footnote{We do not investigate B here as it guarantees a fast rendezvous with no more than $n$ rounds in stable synchronous environments.}. Figure \ref{fig:fail} shows the
results, which validates that Strategy A can not
handle cases well when there are only a small number of open channels at Alice and Bob. Moreover, Strategy C always has a high failure probability, which coincides with the fact that Strategy C is not applicable to stable environments.}

\subsection{Impacts of different parameters}

Now, we perform analyses to examine the impacts of different
parameters: (1) the channel number $n$; (2) the channel open
probability $p$; (3) the environment dynamic factor $\lambda$; and
(4) a third party: an intruder.

\subsubsection{The channel number $n$ }
We first examine how the channel number $n$ influences the
performance of the strategies. Figure \ref{fig:exp_n} shows the
results with different settings of $n$ when
$p_a = p_b=p \in \{0.6, 0.9\}$. We can see with relatively large
channel open possibilities, the performance is insensitive to $n$.
Only Strategy A \hstan{becomes worse} when $n \leq 30$,
which is caused by the properties of geometry distribution. In
addition, for a larger $p$, the influence of $n$ is smaller. It is
reasonable because with a large $p$ there has been plenty of common
channels even with a small $n$.

\subsubsection{The channel open probabilities $p_a$ and $p_b$ }
Figure \ref{fig:exp_p} shows the performance of strategies $A$, $B$
and $C$ with different channel open probabilities. We can see that with the increasing of the open
probability, the expected TTR for Strategies A, B and C
have a clear decreasing trend, which coincides with our theoretical
results stating that the expected time is a reciprocal function of the channel open probability. 

\vspace{-2mm}\subsubsection{The environment dynamic factor }
Table \ref{table:exp_performace} has already shown that Strategy C
achieves the best performance in independent dynamic environments
($\lambda=1$). Here, we quantitatively examine how the environment
dynamic factor $\lambda$ will affect Strategy C. Figure
\ref{fig:exp_a} illustrates the performance of Strategy C when
$\lambda \in [0.05, 1.2]$\footnote{Recall that for $p_a=p_b=p$,
$0\leq\lambda\leq \frac{1}{\min(p,1-p)}$
(Section~\ref{sec:dynamic}).}. We can see the more dynamic the
environment is (the closer $\lambda$ approaches to $1$), the better
Strategy C performs. It is also straightforward that with larger
$p$, the influence of $\lambda$ is smaller as there are plenty of
open channels.
\vspace{-3mm}\begin{figure}[hbpt] 
\begin{minipage}[htbp]{0.48\textwidth}
\centering \epsfig{file=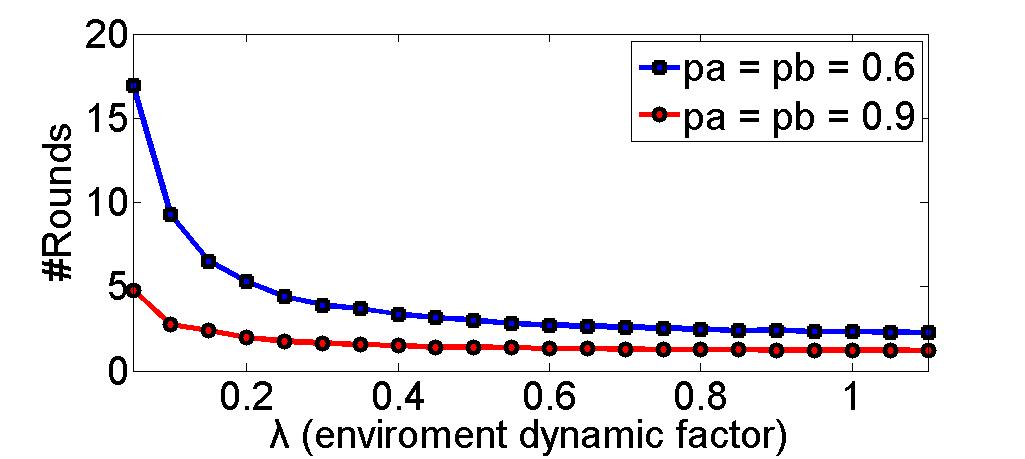, width=0.8\linewidth}
\end{minipage}
\begin{minipage}[htbp]{0.48\textwidth}
\caption{\label{fig:exp_a}Performance of Strategy C with different
environment dynamic factors ($n=30$).}\end{minipage}
\end{figure}\vspace{-6mm}

\subsubsection{A third party: an intruder }\label{sec:sim:impact:q}
A third party, i.e., an intruder, may block some channels between
Alice and Bob. \hstanq{ We assume an intruder opens each
channel with probability of $q$ and does not change the blocked channels over time.} One user does not know a closed channel at his side is due to the existence of the intruder or PUs.  Therefore, for the theoretical analysis of rendezvous strategies, this case is equivalent to the environment
where there are only Alice and Bob with channel open probabilities
of $p_aq$ and $p_bq$, respectively.

Here, we will examine the influence of an intruder in experiments, and Figure \ref{fig:q} illustrates the results.
We can find that 1) The larger the
channel open probability is, the smaller influence the intruder
causes, which is straightforward; and 2) the existence of an intruder
has the largest influence to Strategy A, which is due to the
property of geometry distribution used in A.
%

\begin{figure*}[htbp] 
\centering \subfigure[Strategy A
($\lambda=0$)]{\epsfig{file=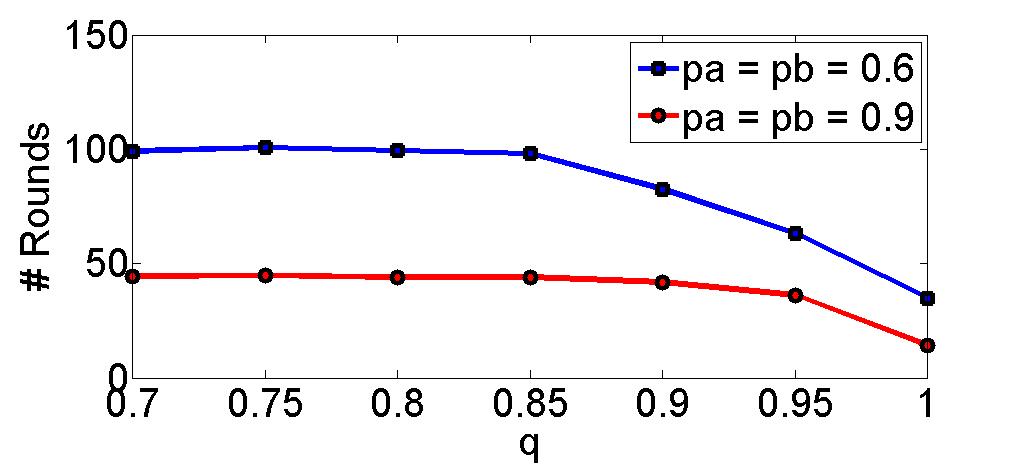, width=0.3\linewidth}}
\subfigure[Strategy B ($\lambda=0$)]{\epsfig{file=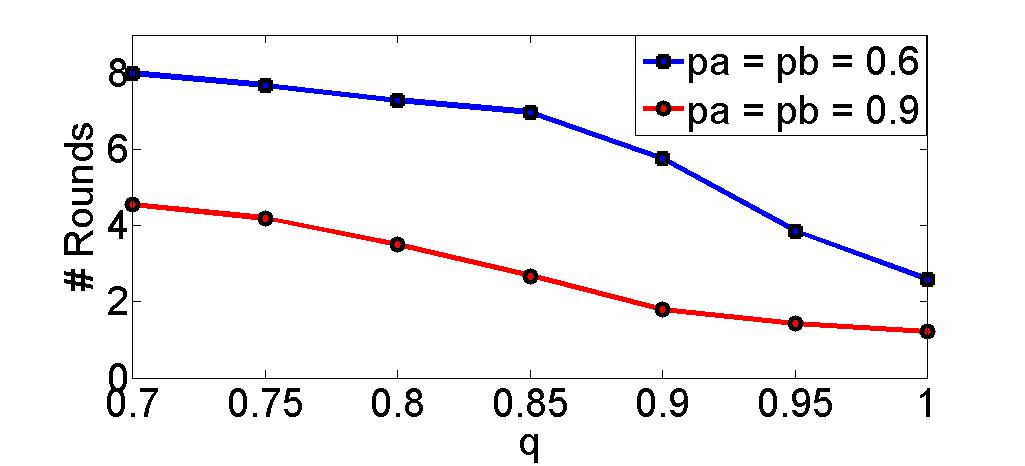,
width=0.3\linewidth}} \subfigure[Strategy C
($\lambda=1$)]{\epsfig{file=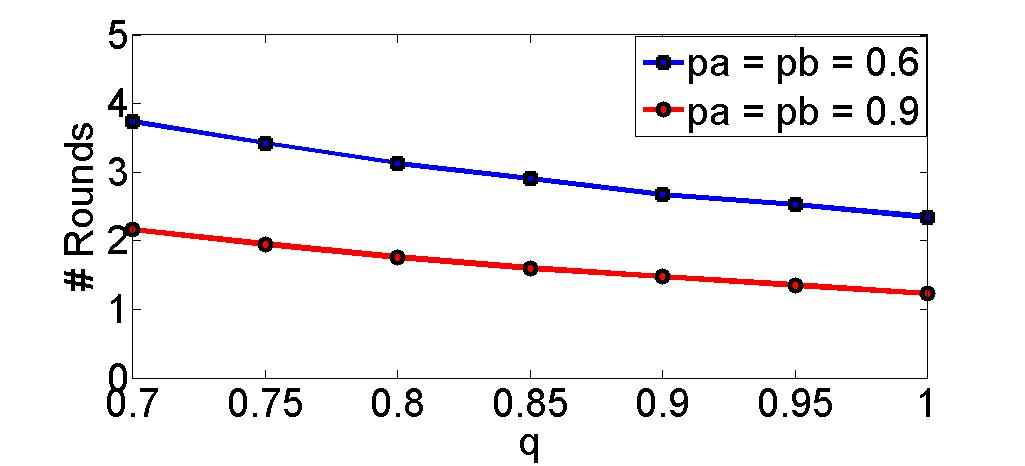, width=0.3\linewidth}}
\vspace{-2mm}\caption{\label{fig:q}Performance of strategies with existence of an
intruder (n=30).}
\end{figure*} \vspace{-6mm}

\section{Conclusion}\label{sec:con}
In this paper we study the rendezvous problem in cognitive radio
networks. For different system settings, such as asynchronous or
synchronous systems in stable or dynamic
environments, we derive various strategies and prove their
optimality in time-to-rendezvous. Simulations have been carried out to
demonstrate the efficiency of our strategies.
The impacts of different parameters on the
TTR are also investigated. In our current work, each
secondary user can only access one channel at a time slot. Designing
optimal rendezvous strategies for SUs that can access multiple
(continuous) channels at a time is an interesting extension of this
work. In addition, to achieve rendezvous among more than two nodes,
where there are additional challenges such as the interference between
simultaneously transmitting nodes, is also an exciting
direction of the future work.

\newpage


\begin{appendices}
\titleformat{\section}{\Large \bfseries}{Appendix \thesection}{1em}{}
\section{Azar et al.'s work} \label{app:azar}
For stable asynchronous environments, Azar et al. proposed in \cite{esaAzar} a stationary strategy based on
geometric distributions shown as Strategy A:

\begin{table}[htbp]
\centering
\begin{tabular}{p{0.8\textwidth}}
 \hline
Strategy A:\\
\hline At each round,\\
 \quad $\mathcal {S}_a$:  Alice chooses her
$i$-th open channel ($i\geq 1$) with possibility of
$\frac{p_b}{6} (1-\frac{p_b}{6})^{i-1}$. \\
\quad $\mathcal {S}_b$:  Bob chooses his $i$-th open channel ($i\geq
1$) with possibility of
$\frac{p_a}{6}(1-\frac{p_a}{6})^{i-1}$. \\
\hline
\end{tabular}
\end{table}
They also claimed \hstan{a lower bound of the expected TTR as $\frac{1}{\min(p_a,p_b)}$}, which holds for any strategies in
the stable environments. The proof is trivial.  Alice is oblivious of Bob's
channel status. No matter what strategy she uses, at a round, when Alice
chooses
a channel $i$, the possibility that channel $i$ is open at Bob is $p_b$.
Therefore, on the side of Alice, the rendezvous possibility at a round is
no more than $p_b$. A similar result holds for Bob. Then, at each round, the
rendezvous probability can not exceed $\min(p_a,p_b)$. \hstan{Therefore, the lower bound is achieved.}

\section{Proof of Theorem~\ref{thm:generalcase}}\label{app:thm5}
\begin{proof}
We first explain why the argument for the independent dynamic environment is not applicable to the general cases.
Take Alice as an example. Given that channel $i$ of Alice is open at
time $t$, it will have a probability of $a_0=\lambda_a(1-p_a)$ to
change to close at time $t+1$. Hence, the open probability of
channel $i$ at time $t+1$ is $1-\lambda_a(1-p_a)$, which in general may not be equal to
$p_a$. Consequently, the events of a channel being open in different time slots are not independent, which makes the previous argument fail when applying to the general cases.

In order to analyze the performance of Strategy C, our idea
is to find some special time slots such that, restricted on these
slots alone, the environment becomes close to the independent dynamic environment for which a tight upper bound on the TTR has already been obtained. We then argue that such ``closeness'' can guarantee a similar time upper bound, which gives the desired result.

We now formalize the above idea. Let $R$ be a parameter to be specified later. We restrict Strategy C on the $(jR+1)$-th rounds for all non-negative integers $j$, or equivalently, consider a new strategy $C'$ which is identical to $C$ on the $(jR+1)$-th rounds, but in other rounds both Alice and Bob do nothing, i.e., not trying to connect with each other. Obviously the TTR of $C'$ is at least as large as that of Strategy $C$. Hence, a proper upper bound on the TTR of $C'$ will suffice for our purpose.

For $i\in\{1,2,\ldots,n\}$ and $K,M\geq 1$, let $\mathcal{P}_i^{K+M}$ denote the probability that channel $i$ is open for Alice at round $K+M$ given $A_i^K$, i.e., the availability of channel $i$ for Alice at round $K$. Let $\mathcal{P}_i^K:=A_i^K$. (Rigorously speaking $A_i^k$ is a random variable which takes value 0 or 1. Nonetheless, as shown in the following, the actual value of $A_i^K$ does not affect the result. Thus we treat $A_i^K$ as a constant.) Recalling that $a_0=\lambda_a(1-p_a)$ and $a_1=\lambda_a p_a$, it is clear that for any integer $M\geq 0$,
\begin{eqnarray*}
\mathcal{P}_i^{K+M+1}&=&(1-\lambda_a(1-p_a))\mathcal{P}_i^{K+M}+\lambda_a p_a(1-\mathcal{P}_i^{K+M})\\
&=&(1-\lambda_a)\mathcal{P}_i^{K+M}+\lambda_a p_a.
\end{eqnarray*}
Rearranging terms gives that for any $M\geq 0$,
\begin{eqnarray*}
\mathcal{P}_i^{K+M+1}-p_a = (1-\lambda_a)(\mathcal{P}_i^{K+M}-p_a),
\end{eqnarray*}
from which it follows that
\begin{equation}\label{eqn:general}
\hstan{\mathcal{P}_i^{K+R}-p_a = (1-\lambda_a)^R(\mathcal{P}_i^{K}-p_a)=(1-\lambda_a)^R(A_i^{K}-p_a).}
\end{equation}
Noting that $(A_i^{K}-p_a)$ is a constant independent of $R$ and that $|1-\lambda_a|<1$, we have
\begin{eqnarray*}
\mathcal{P}_i^{K+R}-p_a \rightarrow 0 \textrm{~~as~} R \rightarrow +\infty .
\end{eqnarray*}
Therefore, for any $j\geq 1$, the open probability for channel $i$ at round $K+jR$ given its status at round $K$ can be arbitrarily close to $p_a$, provided that $R$ is sufficiently large (which can always be guaranteed when $n\rightarrow +\infty$).
So the events that channel $i$ is open for Alice at round $jR+1$, for all $j\geq 0$, are ``approximately'' independent from each other. More precisely, by choosing
$$R=\max\left\{\frac{\ln (1/(0.001p_a))}{\ln (1/|1-\lambda_a|)}, \frac{\ln (1/(0.001p_b))}{\ln (1/|1-\lambda_b|)} \right\},$$
we can obtain from Eqn~(\ref{eqn:general}) that $|\mathcal{P}_i^{K+R}-p_a| \leq 0.001p_a$, which implies that
$0.999p_a \leq \mathcal{P}_i^{K+R} \leq 1.001p_a$. (Here the constant 0.001 is just an illustration; it can be arbitrarily small.)
Similar results also hold for Bob.

Then, in the $(jR+1)$-th round for each $j\geq 1$, the probability
that Alice and Bob both find channel $i$ is at least
$(0.999/1.001)^2\sum_{i=1}^n
(1.001p_a(1-1.001p_a)^{i-1})\times(1.001p_b(1-1.001p_b)^{i-1})$, which can
be regarded as $(0.999/1.001)^2$ times the probability in
Eqn~(\ref{eqn:14}) with $p_a$ and $p_b$ replaced with $1.001p_a$ and
$1.001p_b$ respectively. Then, by routine probability calculations
similar to Theorem~\ref{thm:perfC}, the expected TTR is
at most
$\left(\frac{1.001}{0.999}\right)^2 \times \left(\frac{1}{1.001p_a}+\frac{1}{1.001p_b}-1\right)=O\left(\frac{1}{p_a}+\frac{1}{p_b}-1\right)$.
Finally note that, since only the $(jR+1)$-th rounds are considered,
the actually upper bound on the expected TTR should be
$R$ times the previous bound. By our choice we have
$R=O\left(\ln(1/\min(p_a,p_b))\right)$, as the
environment dynamic factors are regarded as constants.
\end{proof}
\end{appendices}


\end{document}